\newtheorem{thm}{Theorem}
\newtheorem{defn}{Definition}
\newtheorem{open}{Open Problem}
\newcommand{\tfr}[1]{{\widehat{#1}}}
\newcommand{\F}[1]{\mathbb{F}_{2^{#1}}}
\newenvironment{proof}
{\noindent {\bf Proof.}} {\hfill $\Box$\\}
\DeclareMathOperator{\NL}{NL}
\DeclareMathOperator{\Tr}{Tr}
\begin{document}

\title{A Highly Nonlinear Differentially 4 Uniform Power Mapping That Permutes Fields of Even Degree}

\author{Carl Bracken$^1$,  Gregor Leander$^2$\\
$^1$Department of Mathematics, National University of Ireland\\
Maynooth, Co. Kildare, Ireland\\
$^2$Department of Mathematics, Technical University Denmark\\
Copenhagen, Denmark}

\maketitle

\begin{abstract}
Functions with low differential uniformity can be used as the s-boxes of symmetric cryptosystems as they have good resistance to differential attacks. The AES (Advanced Encryption Standard) uses a differentially-4 uniform function called the inverse function. Any function used in a symmetric cryptosystem should be a permutation. Also, it is required that the function is highly nonlinear so that it is resistant to Matsui's linear attack. In this article we demonstrate that the highly nonlinear permutation $f(x)=x^{2^{2k}+2^k+1}$, discovered by Hans Dobbertin \cite{Dob}, has differential uniformity of four and hence, with respect to differential and linear cryptanalysis, is just as suitable for use in a symmetric cryptosystem as the inverse function.
\end{abstract}

\maketitle

\section{Introduction}
Functions with a low differential uniformity are interesting from the point of view of cryptography as they provide good resistance to differential attacks \cite{N}. For a function to be used as an s-box of a symmetric cryptosystem it should be a permutation and defined on a field with even degree. It is also essential that the function has high nonlinearity so that it is resistant to Matsui's linear attack \cite{Mat}. The lowest possible differential uniformity is 2 and functions with this property are called APN (almost perfect nonlinear). There has been much recent work and progress on APN functions (see  \cite{BB},\cite{BB1},\cite{BCL},\cite{BCL1},\cite{BCL2}). However, at present there are no known APN permutations defined on fields of even degree and it is actually the most important open question in this field if such functions exist. This is why the AES (advanced encryption standard) uses a differentially 4 uniform function, namely the inverse function.

For the rest of the paper, let $L = \F{n}$ for $n > 0$ and let $L^*$ denote the set of non-zero elements of $L$. Let $\Tr: L \rightarrow \F{}$ denote the trace map from $L$ to $\F{}$. For positive integers $r,k$ by $\Tr^{rk}_k$ we denote the relative trace map from $\F{rk}$ to $\F{k}$ and by $\Tr^{r}$ the absolute trace from $\F{r}$ to $\F{}$.

\begin{defn}
A function $f: L \rightarrow L^*$ is said to be differentially $\delta$ uniform.
if for any $a\in L^*, b \in L$,  we have
$$ |\{x \in L : f(x+a)+f(x) = b \}| \leq \delta. $$
\end{defn}

\begin{defn}
Given a function $f:L \rightarrow L$, the {\emph {\bf{Fourier transform}}} of $f$ is the function
$\tfr f: L \times L^* \rightarrow \mathbb Z$ given by
$$\tfr f(a,b) = \sum_{x \in L}(-1)^{\Tr(ax+bf(x))}.$$
\label{WT}

\end{defn}

The {\emph{Fourier spectrum}} of $f$ is the set of integers
$$\Lambda_f= \{\tfr f(a,b) : a, b \in L, b \neq 0 \}.$$

The nonlinearity of a function $f$ on a field $L=\F{n}$ is defined as

$$NL(f):= 2^{n-1} - \frac{1}{2}\max_{x \in \Lambda_f} \ |x| .$$

The nonlinearity of a function measures its distance
to the set of all affine maps on $L$.
We thus call a function {\emph{maximally nonlinear}} if its nonlinearity is as large as possible.
If $n$ is odd, its nonlinearity is upper-bounded by $2^{n-1}-2^{\frac{n-1}{2}}$, while for $n$ even a conjectured upper bound is  $2^{n-1}-2^{\frac{n}{2}-1}$.
For odd $n$, we say that a function $f : L\longrightarrow L$ is \emph{almost bent} (AB)
when its Fourier spectrum is $\{0, \pm 2^{\frac{n+1}{2}}\} $,
in which case it is clear from the upper bound that $f$ is maximally nonlinear.

In an article of Hans Dobbertin \cite{Dob} he offers a list of power mappings that permute fields of even degree and meet the conjectured nonlinearity bound of $2^{n-1}-2^{\frac{n}{2}-1}$. Following Dobbertin's terminology we shall refer to such mapping as highly nonlinear permutations.
In \cite{Dob} Dobbertin conjectured that this list was complete and noted that this had been verified for $n \leq 22$. The inverse function (used in AES) is highly nonlinear and hence is on the list. One of the functions on Dobbertin's list is the power mapping
$f(x)=x^{2^{2k}+2^{k}+1}$, defined on $\F{4k}$, with $k$ odd.

In this article we show that this function has differential uniformity of 4. We also provide another proof of this functions nonlinearity property. This means that this function has the same resistance to both the linear and differential attacks as the inverse function.

\section{Differential Uniformity of $f(x)=x^{2^{2k}+2^k+1}$}
As mentioned above there are no known permutations of even degree fields with differential uniformity of two. The following theorem shows that $x^{2^{2k}+2^k+1}$ has the next best
(and best known) differential uniformity, which is four.

\begin{thm}
Let $f(x)=x^{2^{2k}+2^k+1}$ be defined on $\F{4k}$. Then $f(x)$ has differential uniformity of four.
\end{thm}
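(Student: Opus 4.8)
The plan is to use the standard reduction for power mappings and then exploit the tower $\F{k}\subset\F{2k}\subset\F{4k}$. Since $f$ is a monomial, substituting $x=ay$ in $f(x+a)+f(x)=b$ gives $a^{d}\bigl(f(y+1)+f(y)\bigr)=b$ with $d=2^{2k}+2^{k}+1$, so the differential uniformity equals $\max_{b}\,\bigl|\{y:f(y+1)+f(y)=b\}\bigr|$ and it suffices to treat $a=1$. Expanding $(x+1)^{d}=(x^{2^{2k}}+1)(x^{2^{k}}+1)(x+1)$ and cancelling $x^{d}$, the equation $f(x+1)+f(x)=b$ becomes
$$x^{2^{2k}+2^{k}}+x^{2^{2k}+1}+x^{2^{k}+1}+x^{2^{2k}}+x^{2^{k}}+x=c,\qquad c:=b+1.\quad(\ast)$$
Observe that $x\mapsto x+1$ sends solutions of $(\ast)$ to solutions, so the solution set is a union of pairs $\{x,x+1\}$; this forces the count to be even and explains the role of $\F{}=\{0,1\}$ below.

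First I would symmetrise. With $\sigma\colon x\mapsto x^{2^{k}}$, the left side of $(\ast)$ is the second plus first elementary symmetric function of $x,\sigma x,\sigma^{2}x$. Adding $(\ast)$ to its image under $\sigma^{2}$, the quadratic terms telescope and one is left with a relation purely in $m:=x+x^{2^{2k}}$, namely $m^{2^{k}}(m+1)=c+c^{2^{2k}}$. Since $m^{2^{2k}}=m$ we have $m\in\F{2k}$, so, viewing this over $\F{2k}/\F{k}$ with $m^{2^{k}}$ the conjugate of $m$, taking the relative norm and trace converts it into a single quadratic equation over $\F{k}$ for $\mathrm{Nm}(m)$. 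Hence $m$ can take at most two values $m_{1},m_{2}$.

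The second step fixes $m$ and linearises. Substituting $x^{2^{2k}}=x+m$ into $(\ast)$ collapses it to $x^{2}+m\,x+(m+1)\,x^{2^{k}}=c+m$, which is $\F{}$-affine because $x\mapsto x^{2},x^{2^{k}},x$ are additive. Its solution set, intersected with the defining relation $x+x^{2^{2k}}=m$, is a coset of $V_{m}:=\ker\mathcal L_{m}\cap\F{2k}$, where $\mathcal L_{m}(x)=x^{2}+(m+1)x^{2^{k}}+mx$; note $\{0,1\}\subseteq\ker\mathcal L_{m}$, matching the $x\mapsto x+1$ symmetry. I would bound $|V_{m}|$ by applying the Frobenius of $\F{2k}/\F{k}$ to $\mathcal L_{m}(x)=0$ and eliminating the conjugate of $x$: this yields a cubic in $x$ having $x=1$ as a root, whose cofactor is a quadratic of the shape $x^{2}+x+C_{m}$. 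Because the two roots of such a quadratic differ by $1$, the kernel $V_{m}$ equals $\F{}$ (size two) unless this quadratic splits in $\F{2k}$ with its roots genuinely lying in $\ker\mathcal L_{m}$.

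The main obstacle is exactly this degenerate possibility, where $V_{m}$ jumps to a four-element space, so that a single branch $m_{i}$ could a priori contribute four solutions. I would carry out a case analysis showing that whenever one branch is degenerate the companion relation $m^{2^{k}}(m+1)=c+c^{2^{2k}}$ pins $c$ down and renders the other branch inconsistent, keeping the total at four; here the hypothesis that $k$ is odd is essential, entering through the comparison of $x^{2^{k}}$ with $x^{2}$ on small subfields such as $\mathbb F_{4}$ (equivalently $\gcd(k-1,2k)=2$). Combining the two steps bounds the number of solutions of $(\ast)$ by four for every $c$, so the differential uniformity is at most four. Finally, taking $b=1$ (so $c=0$): since $d\equiv1\pmod 3$ the map $f$ fixes $\mathbb F_{4}$ pointwise, whence every $x\in\mathbb F_{4}$ satisfies $f(x+1)+f(x)=1$. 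This exhibits four solutions and shows the value four is attained.
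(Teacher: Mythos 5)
Your reduction to $a=1$, the expansion giving $(\ast)$, and the symmetrisation yielding $m^{2^k}(m+1)=c+c^{2^{2k}}$ for $m=x+x^{2^{2k}}\in\F{2k}$ are all correct, and the conclusion that $m$ takes at most two values is exactly the content of the paper's Equation (4) (the paper obtains the equivalent quadratic $m^2+(t+1)m+c^{2^k}+c^{2^{3k}}=0$ directly). The linearisation $x^2+mx+(m+1)x^{2^k}=c+m$ for fixed $m$ is also right, as is the observation that the solutions sharing a given $m$ fill at most one coset of $V_m=\ker\mathcal L_m\cap\F{2k}$, with $\{0,1\}\subseteq V_m$ and $|V_m|\le 4$.

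However, the argument stops exactly where the real work begins: what you have proved is an upper bound of $2\times 4=8$, and the passage from $8$ to $4$ is only announced (``I would carry out a case analysis showing that whenever one branch is degenerate \dots the other branch is inconsistent''), not carried out. This is the heart of the theorem and it is not routine: you must exclude totals of $6$ (one branch of size $4$, one of size $2$) as well as $8$, and nothing in your setup does so. The paper closes this gap by a second round of the same descent: it shows that \emph{both} values of $m$ lead to one and the same equation in $w=z+z^{2^k}$ (its Equation (11)), which in turn splits into the two branches $w=q$ and $w=q+1$, each finally collapsing to a quadratic $x^2+x+(\cdots)=0$ in $x$ itself; two quadratics give at most four solutions altogether. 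That collapsing of the two $m$-branches into a single equation is precisely what your coset-counting framework does not see. A further warning sign is your claim that oddness of $k$ is ``essential'' to handle the degenerate case: the paper explicitly remarks that the differential uniformity bound needs no parity assumption ($k$ odd is only required for $f$ to be a permutation), so an argument that genuinely relies on $k$ odd is either proving less than the stated theorem or is not the right argument. (Your closing observation that $f$ fixes $\mathbb{F}_4$ pointwise when $k$ is odd, so that $\mathbb{F}_4$ supplies four solutions of $f(x+1)+f(x)=1$, is a correct and pleasant way to see that the value four is attained; the paper does not include such an attainment step.)
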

\begin{proof}
We need to demonstrate that the equation
$$x^{2^{2k}+2^k+1}+(x+a)^{2^{2k}+2^k+1}=b$$
has no more than four solutions for all $a \in {\F{4k}}^*$ and all $b \in \F{4k}$.

Expansion of this expression yields
$$ax^{2^{2k}+2^k}+a^{2^k}x^{2^{2k}+1}+a^{2^{2k}}x^{2^k+1}+a^{2^k+1}x^{2^{2k}}+a^{2^{2k}+1}x^{2^k}+a^{2^{2k}+2^k}x+a^{2^{2k}+2^k+1}=b.$$
Next we replace $x$ with $xa$ and divide by $a^{2^{2k}+2^k+1}$ and obtain
\begin{eqnarray}\label{eqn:1}
x^{2^{2k}+2^k}+x^{2^{2k}+1}+x^{2^k+1}+x^{2^{2k}}+x^{2^k}+x+c&=&0
\end{eqnarray}
where $c=a^{-2^{2k}-2^k-1}b+1$.

Let $\Tr^{4k}_k$ denote the relative trace map from $\F{4k}$ to $\F{k}$.

As $\Tr^{4k}_k (x^{2^{2k}+2^k}+x^{2^{2k}+1}+x^{2^k+1}+x^{2^{2k}}+x^{2^k})=0$, Equation (\ref{eqn:1}) implies
$\Tr^{4k}_k(x+c)=0$.

Which is equivalent to
\begin{eqnarray} \label{eqn:t}
x+x^{2^k}+x^{2^{2k}}+x^{2^{3k}}=t
\end{eqnarray}
where $t=\Tr^{4k}_k (c)$. We note that $t \in \F{k}$.

Equation (\ref{eqn:1}) now becomes
$$x(x^{2^k}+x^{2^{2k}})+x^{2^{2k}+2^{k}}+x^{2^{3k}}+t+c=0.$$
Which implies
$$x(x+x^{2^{3k}}+t)+x^{2^{2k}+2^{k}}+x^{2^{3k}}+t+c=0.$$
From which we obtain
\begin{eqnarray} \label{eqn:2}
 x^2+x^{2^{3k}+1}+xt+x^{2^{2k}+2^{k}}+x^{2^{3k}}+t+c &=& 0.
\end{eqnarray}
We raise Equation (\ref{eqn:2}) by $2^{2k}$ and get
\begin{eqnarray} \label{eqn:3}
x^{2^{2k+1}}+x^{2^{k}+2^{2k}}+x^{2^{2k}}t+x^{2^{3k}+1}+x^{2^{k}}+t+c^{2^{2k}}&=&0.
\end{eqnarray}
Now we add Equations (\ref{eqn:2}) and (\ref{eqn:3}) and make use of (\ref{eqn:t}). This gives
\begin{eqnarray} \label{eqn:4}
 (x+x^{2^{2k}})^2+(t+1)(x+x^{2^{2k}})+c^{2^k}+c^{2^{3k}}&=&0.
\end{eqnarray}
The remainder of the proof is divided into two cases. They are $t=1$ and $t \neq 1$.

If $t=1$ then Equation (\ref{eqn:4}) implies
$$x+x^{2^{2k}}=c^{2^{k-1}}+c^{2^{3k-1}}.$$

We let $r=c^{2^{k-1}}+c^{2^{3k-1}}$. Therefore $x^{2^{2k}}=x+r$. Placing this into Equation (\ref{eqn:1}) yields
$$(x+r)x^{2^k}+(x+r)x+x^{2^{k}+1}+x^{2^k}+c+r=0.$$
Which we write as
\begin{eqnarray} \label{eqn:5}
x^2+r(x+x^{2^k})+x^{2^k}+r+c&=&0.
\end{eqnarray}
Raising Equation (\ref{eqn:5}) by $2^k$ we obtain
\begin{eqnarray} \label{eqn:6}
x^{2^{k+1}}+r^{2^k}(x^{2^k}+x+r)+x+r+r^{2^{k}}+c^{2^k}&=&0.
\end{eqnarray}
Next we add Equations (\ref{eqn:5}) and (\ref{eqn:6}) to get
\begin{eqnarray} \label{eqn:7}
 (x+x^{2^k})^2+(r+r^{2^k}+1)(x+x^{2^k})+r^{2^k+1}+c+c^{2^k}+r^{2^k}&=&0.
\end{eqnarray}

Note that $r+r^{2^k}=x+x^{2^k}+x^{2^{2k}}+x^{2^{3k}}=t$, hence if $t=1$ Equation (\ref{eqn:7}) becomes
$$(x+x^{2^k})^2+r^{2^k+1}+c+c^k+r^{2^k}=0.$$
This implies $x+x^{2^k}=s$ where $s=\sqrt{r^{2^k+1}+c+c^k+r^{2^k}}$. Now we replace $x^{2^k}$ by $x+s$ in Equation (\ref{eqn:5}) and obtain
$$x^2+x+rs+s+r+c=0,$$
which can have no more than two solutions in $x$.

Next we consider the case $t \neq 1$.

We replace $x$ with $(t+1)z$ in Equation (\ref{eqn:4}) and get
$$(t+1)^2((z+z^{2^{2k}})^2+(z+z^{2^{2k}}))+c^{2^k}+c^{2^{3k}}=0.$$
Now let $y=z+z^{2^{2k}}$ so we have
$$(t+1)^{2}(y^2+y)+c^{2^k}+c^{2^{3k}}=0.$$
This equation has at most two solutions in $y$. They are of the form $y=p$ and $y=p+1$ for some fixed $p$.
This implies that $z^{2^{2k}}=z+p$ or $z^{2^{2k}}=z+p+1$. Note that $p \in \F{2k}$.

If $z^{2^{2k}}=z+p$ then Equation (\ref{eqn:1}) becomes
$$(t+1)^{2}((z+p)z^{2^k}+(z+p)z+z^{2^k+1})+(t+1)(z^{2^k}+p)+c=0,$$
which gives
\begin{eqnarray} \label{eqn:8}
(t+1)^{2}((z+z^{2^k})p+z^2)+(t+1)(z^{2^k}+p)+c&=&0.
\end{eqnarray}
We raise Equation (\ref{eqn:8}) by $2^k$ and obtain
\begin{eqnarray} \label{eqn:9}
(t+1)^{2}((z+z^{2^k}+p)p^{2^k}+z^{2^{k+1}})+(t+1)(z+p+p^{2^k})+c^{2^k}&=&0.
\end{eqnarray}
Next we add Equations (\ref{eqn:8}) and (\ref{eqn:9}) to get
$$(t+1)^{2}((p+p^{2^k})(z+z^{2^k})+(z+z^{2^{k}})^2+p^{2^k+1})+(t+1)(z+z^{2^k}+p^{2^k})+c+c^{2^k}=0,$$
which becomes
\begin{eqnarray}
(t+1)^{2}(z+z^{2^k})^2 + ((t+1)^{2}(p+p^{2^k})+(t+1))(z+z^{2^k}) \nonumber \\
 +(t+1)^{2}p^{2^k+1}+(t+1)p^{2^{k}}+c+c^{2^k}&=&0.  \label{eqn:10}
\end{eqnarray}

Recall $t=x+x^{2^k}+x^{2^{2k}}+x^{2^{3k}}=(t+1)(z+z^{2^k}+z^{2^{2k}}+z^{2^{3k}})$.

Also $p+p^{2^k}=z+z^{2^k}+z^{2^{2k}}+z^{2^{3k}}$, hence $p+p^{2^k}=\frac{t}{t+1}$.

Therefore Equation (\ref{eqn:10}) becomes
\begin{eqnarray}
(t+1)^{2}((z+z^{2^k})^2+(z+z^{2^k}))+(t+1)^{2}p^{2^k+1} \nonumber \\+(t+1)p^{2^{k}}+c+c^{2^k}&=&0. \label{eqn:11}
\end{eqnarray}

It can easily be verified that if we had assumed $z^{2^{2k}}=z+p+1$ then the same computations as above would also yield Equation (\ref{eqn:11}), so this case need not be considered.

Next we let $z+z^{2^k}=w$ and write Equation (\ref{eqn:11}) as
$$(t+1)^{2}(w^2+w)+(t+1)^{2}p^{2^k+1}+(t+1)p^{2^{k}}+c+c^{2^k}=0.$$
This equation has at most two solutions in $w$ which take the form $w=q$ and $w=q+1$ for some fixed $q$.

This implies that $z^{2^k}=z+q$ or $z^{2^{k}}=z+q+1$.

If $z^{2^k}=z+q$ then $z^{2^{2k}}=z+q+q^{2^k}$ and Equation (\ref{eqn:1}) becomes
$$(t+1)^{2}((z+q+q^{2^k})(z+q)+(z+q+q^{2^k})z+(z+q)z)+(t+1)(z+q^{2^k})+c=0.$$
This simplifies to
$$(t+1)^{2}z^2+(t+1)z+(t+1)^2(q^{2^k+1}+q^2)+(t+1)q^{2^k}+c=0,$$
which is the same as
$$x^2+x+(t+1)^2(q^{2^k+1}+q^2)+(t+1)q^{2^k}+c=0.$$

If on the other hand  $z^{2^k}=z+q+1$, then we would obtain
$$x^2+x+(t+1)^2(q^{2^k+1}+q^{2^k}+q^2+q)+(t+1)(q+1)^{2^k}+c=0.$$
Clearly, this pair of equations will allow no more than four solutions in $x$ and the proof is complete.
\end{proof}

Note that we did not need to assume that $k$ is odd to derive the differential uniformity of four, however it is easy to see that the function is not a permutation if $k$ is even as $g.c.d.(2^{4k}-1, {2^{2k}+2^{k}+1})=1$ if and only if $k$ is odd.
\bigskip

\section{Nonlinearity of $f(x)=x^{2^{2k}+2^k+1}$ }
In this section we give a slightly different proof of the fact that $x^{2^{2k}+2^{k}+1}$ has  $\NL(f)=2^{n-1}-2^{\frac{n}{2}-1}$. Most importantly, our proof also covers the case where the function is not a permutation, i.e., when $k$ is even.

Technically, the main difference to Dobbertin's proof in \cite{Dob} is that we are not going to use an $\F{k}$ basis of $\F{4k}$ to express elements in $\F{4k}$ but rather a $\F{2k}$ basis. This change makes some of the ``lengthy but routine'' computations, as Dobbertin states it, easier.
\begin{thm}
Let $f(x)=x^{2^{2k}+2^k+1}$ be defined on $\F{4k}$. Then
\[ \NL(f)=2^{n-1}-2^{\frac{n}{2}-1}. \]
\end{thm}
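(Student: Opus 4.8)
The goal is to compute the full Fourier spectrum of $f(x)=x^{2^{2k}+2^k+1}$ on $L=\F{4k}$ (with $n=4k$) and show that the maximum magnitude $\max_{(a,b)}|\tfr f(a,b)|$ equals $2^{n/2}$, which by the definition of $\NL$ immediately gives $\NL(f)=2^{n-1}-2^{n/2-1}$. The quantity to control is the Walsh sum $\tfr f(a,b)=\sum_{x\in L}(-1)^{\Tr(ax+bx^{2^{2k}+2^k+1})}$ for $b\neq 0$. The standard reduction is to compute the moments of these sums: since $f$ is a power map, one can bound $|\tfr f(a,b)|$ by controlling $\sum_{a,b}\tfr f(a,b)^4$ (the fourth moment), and combine this with the first and second moments which are forced by Parseval's identity $\sum_a \tfr f(a,b)^2 = 2^{2n}$ for each fixed $b$.

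My plan is as follows. First I would reduce the study of $\tfr f(a,b)$ over all $b\neq 0$ to a single character sum by exploiting the power-map structure: writing $b=u^{2^{2k}+2^k+1}$ is not always possible (the exponent need not be coprime to $2^{4k}-1$), so instead I would work directly with the bilinear-form interpretation. The exponent $d=2^{2k}+2^k+1$ satisfies $d(2^k-1)=2^{3k}-1$, and more usefully $x^d$ is closely tied to the relative trace structure from $\F{4k}$ to $\F{2k}$ and $\F{k}$ that the authors have already set up and exploited in the first theorem. The key algebraic fact, as the remark before the theorem signals, is to use an $\F{2k}$-basis of $\F{4k}$: writing $x$ in coordinates over $\F{2k}$ turns the exponent $2^{2k}+2^k+1$ into a manageable quadratic-type expression, because $x^{2^{2k}}$ becomes the conjugate of $x$ over $\F{2k}$. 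Concretely, over the quadratic extension $\F{4k}/\F{2k}$ one has $x^{2^{2k}}=\bar x$, so $x^{2^{2k}+1}=x\bar x=N(x)\in\F{2k}$, and the factor $x^{2^k}$ is a Frobenius twist; this should collapse $\Tr(bx^d)$ into a quadratic form in the $\F{2k}$-coordinates of $x$.

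The central computation I would carry out is to show that, after this coordinate change, $\tfr f(a,b)^2$ (or the inner sum over one coordinate) reduces to a Gauss-sum-type evaluation of a quadratic form over $\F{2k}$. A nondegenerate quadratic form in $m$ variables over $\F{2}$ has Walsh values in $\{0,\pm 2^{m/2}\}$ (or $\{0,\pm 2^{(m+s)/2}\}$ with $s$ the radical dimension), so the whole argument amounts to establishing that the relevant quadratic form has bounded radical dimension, which caps $|\tfr f(a,b)|$ at $2^{n/2}$. I would compute $\tfr f(a,b)^2=\sum_{x,y}(-1)^{\Tr(a(x+y)+b(f(x+y)+f(y)))}$ and, by translating $x\mapsto x+y$, recognize the inner structure as the derivative $f(x+y)+f(y)$, which is exactly the differential expression analyzed in Theorem~1; this links the two sections and lets me reuse the reduction to Equations~(\ref{eqn:1})--(\ref{eqn:4}).

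The main obstacle I anticipate is controlling the degenerate directions of the quadratic form, i.e.\ precisely determining the dimension of its kernel (radical) as $a,b$ vary, since this dimension governs whether the Walsh value is $0$, $\pm 2^{n/2}$, or something larger. Establishing that the kernel never exceeds the size forcing the value $2^{n/2}$ is the crux, and it is here that the relative-trace identities and the linearized-polynomial analysis from Theorem~1 must be invoked carefully; the case split on whether certain trace conditions vanish (analogous to the $t=1$ versus $t\neq 1$ dichotomy above) is likely where the real work lies. Once the quadratic form is shown to be nondegenerate (or minimally degenerate) for all admissible $(a,b)$, the spectrum bound $\{0,\pm 2^{n/2}\}$ follows, and since $2^{n/2}$ is attained, the nonlinearity equals $2^{n-1}-2^{n/2-1}$ as claimed.
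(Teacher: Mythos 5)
Your plan rests on a structural claim that is false: that passing to $\F{2k}$-coordinates ``collapses $\Tr(bx^{2^{2k}+2^k+1})$ into a quadratic form'' whose Gauss sum and radical can then be analysed. The exponent $2^{2k}+2^k+1$ has binary weight $3$, so $\Tr(bf(x))$ has algebraic degree $3$; concretely, in the coordinates $x=y+\omega a$ the expression contains the term $\Tr^{2k}(\gamma^2ya^{2^k+1})$, which is cubic over $\F{}$ (one bit of $y$ times two bits of $a$). Algebraic degree is an affine invariant, so $\Tr(bf(x))$ is not equivalent to any quadratic form, and the theorem you want to invoke --- that the Walsh values of a quadratic form lie in $\{0,\pm2^{(n+s)/2}\}$ with $s$ the radical dimension --- simply does not apply; ``bounding the radical'' is not the crux of this problem. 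Your fallback via moments also cannot work: differential uniformity $4$ bounds $\sum_{a,b}\tfr f(a,b)^4$ by roughly $4\cdot 2^{4n}$, which is compatible with a single Walsh value of size close to $2^n$ and therefore gives no control on the maximum beyond the trivial bound. The remaining route you gesture at --- squaring the Walsh sum so that the inner sum involves the quadratic derivative $D_yf(x)=f(x+y)+f(x)$ --- is a genuine technique for cubic exponents, but it is exactly the place where you say ``the real work lies'' and you carry out none of it: one must determine the radical dimension of $D_yf$ for every $y$, control the signs of the resulting partial Gauss sums, and then recover $\tfr f(a,b)$ from $\tfr f(a,b)^2$ written as a signed sum of powers of two, which needs divisibility input you have not identified.

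What actually makes the $\F{2k}$-basis useful --- and this is the paper's entire argument --- is not a quadratic form but a Maiorana--McFarland structure. Writing $x=y+\omega a$ with $y,a\in\F{2k}$ and $\omega^2+\omega+\alpha=0$, the function $\Tr(\gamma^2x^{2^{2k}+2^k+1})$ is \emph{linear in $y$} for each fixed $a$: the pure-$y$ term vanishes because $\gamma^2y^{2^k+2}\in\F{2k}$ has absolute trace $0$ from $\F{4k}$. Consequently the sum over $y$ contributes $2^{2k}$ times the indicator that $a$ lies in the fiber $M=\{a:\pi(a)=u\}$ of the explicit map $\pi(a)=\gamma a^{2^{k-1}}+\gamma^2a^{2^k+1}$, and the theorem reduces to two finite facts: a linearized-polynomial computation showing that differences of elements of $M$ lie in $\F{k}$ and satisfy a degree-$4$ equation, whence $|M|\in\{0,1,2,4\}$; and, in the only dangerous case $|M|=4$, an evaluation using $c_0c_1(c_0+c_1)=\gamma^{-1}$ and $\Tr^{k}(\gamma^2)=1$ showing that the four residual $\pm1$ terms sum to $\pm2$ rather than $\pm4$, giving the bound $2^{2k+1}$. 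None of these steps, nor any substitute for them, appears in your proposal, so as written it has no path to the stated nonlinearity.
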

\begin{proof}
We have to show that for any non-zero $b$ and any $a$ the absolute value of the Fourier coefficient $\tfr{f}(a,b) $ is smaller or equal to $2^{2k+1}$. There are two cases to consider. If $k$ is odd, then $f$ is a bijection and it is therefore enough to study the case $b=1$. If $k$ is even, then $\gcd(2^{2k}+2^k+1,2^{4k}-1)=3$ and up to equivalence there are two different $b$ to consider, namely the case $b=1$ and $b$ any non-cube. Here we remark that in the case $k$ even we can always choose a non cube in $\F{k}$ with out loss of generality. Thus, in both cases it is enough to study $b\in \F{r}$. Moreover, we can restrict the case to elements $b \in \F{k}$ such that $\Tr^{{k}}(b)=1$.

Let $\gamma$ be any non-zero element in $\F{k}$ such that $\Tr^{{k}}(\gamma)=1$. For simplicity we denote by $g_{\gamma^2}(x)=\Tr(\gamma^2 x^{2^{2k}+2^k+1})$ (we use $\gamma^2$ instead of $\gamma$ to avoid dealing with square roots later on) . Furthermore, let $\alpha\in \F{2k}$ be an element fulfilling the equation $\alpha^2+\gamma \alpha+\gamma^3=0$. As $\Tr^{k}(\gamma)=1$ the polynomial  $\alpha^2+\alpha+\gamma=0$ is irreducible over $\F{k}$ and by replacing $\alpha$ by $\alpha \gamma^{-1}$ and multiplying across by $\gamma^2$ we see that the polynomial  $\alpha^2+\gamma \alpha+\gamma^3=0$ is irreducible as well. Therefore $\alpha \notin \F{k}$ and furthermore it holds that $\alpha^{2^k}+\alpha=\gamma$. Thus,
\[ \Tr^{{2k}}(\alpha)=\Tr^{k}(\alpha^{2^k}+\alpha)=\Tr^{{k}}(\gamma)=1. \]
This implies that the polynomial $x^2+x+\alpha$ is irreducible over $\F{2k}$ and finally every element in $\F{4k}$ can be represented by $y+\omega a$, where $y,a \in \F{2k}$ and $\omega \in \F{4k}$ with $\omega^2+\omega+\alpha=0$. Using this expression for $x$ we compute
\begin{eqnarray*}
g_{\gamma^2}(x)&=& g_{\gamma^2}(y+\omega a) \\
&=& \Tr( \gamma^2(y+\omega a)^{2^{2k}+2^k+1}) \\
&=& \Tr( \gamma^2 y^{2^{2k}+2^k+1})\\
&&+\Tr(\gamma^2( y^{2^{2k}+2^k}(\omega a)+y^{2^{2k}+1}(\omega a)^{2^k}+y^{2^k+1}(\omega a)^{2^{2k}})) \\
&&+\Tr(\gamma^2 ( y^{2^{2k}}(\omega a)^{2^k+1}+y^{2^k}(\omega a)^{2^{2k}+1}+y(\omega a)^{2^{2k}+2^k} ))\\
&&+\Tr( \gamma^2 (\omega a)^{2^{2k}+2^k+1}) \\
&=& A+B+C+D.
\end{eqnarray*}
First we note that $A=0$ as $\gamma^2$ and $y$ are in $\F{{2k}}$. Furthermore $B$ can be simplified,
\begin{eqnarray*}
 B&=&\Tr(\gamma^2 y^{2^k+1}(\omega a)+\gamma^2 y^2(\omega a)^{2^k}+\gamma^2 y^{2^k+1}(\omega a)^{2^{2k}}) \\
 &=&\Tr( \gamma^2 y^{2^k+1}( (\omega a)+(\omega a)^{2^{2k}})+\gamma^2 y^2(\omega a)^{2^k}) \\
 &=& \Tr(\gamma^2 y^2(\omega a)^{2^k}).
 \end{eqnarray*}
where the last equality follows as $\gamma^2 y^{2^k+1}( (\omega a)+(\omega a)^{2^{2k}})$ is in $\F{2k}$. Now consider the term $C$. We first remark that $\gamma^2 y^{2^k}(\omega a)^{2^{2k}+1}$ is in the subfield $\F{2k}$ and thus
\begin{eqnarray*}
C&=&+\Tr(\gamma^2 (y^{2^{2k}}(\omega a)^{2^k+1}+y(\omega a)^{2^{2k}+2^k} ))\\
&=& \Tr(\gamma^2 y( (\omega a)^{2^k+1}+(\omega a)^{2^{2k}+2^k})).
\end{eqnarray*}
Therefore
\begin{eqnarray*}
g(x)&=& g(y+\omega a) \\
&=& \Tr( y\left( \gamma (\omega a)^{2^{k-1}}+\gamma^2(\omega a)^{2^k+1}+\gamma^2(\omega a)^{2^{2k}+2^k}\right)+  \gamma^2(\omega a)^{2^{2k}+2^k+1}).
\end{eqnarray*}
The important observation is that this expression is linear in $y$. Thus, the function belongs to the generalized Maiorana McFarland type of functions. Next, we compute an expression of $g$ using the absolute trace on $\F{2k}$ denoted by $\Tr^{2k}$. For this we make use of the following equations
\[ \omega+\omega^{2^{2k}}=1\]
and
\[ \omega^{2^{2k}+2^k+1}+(\omega^{2^{2k}+2^k+1})^{2^{2k}}=\alpha \]
that follow from the fact that the two solutions of
\[ x^2+x+\alpha =0 \]
are $\omega$ and $\omega^{2^{2k}}$.
\begin{eqnarray*}
g(y+\omega a)&=& \Tr^{2k}( y \left( \gamma a^{2^{k-1}}(\omega+\omega^{2^{2k}})^{2^{k-1}}+ \gamma^2(a(\omega+\omega^{2^{2k}}))^{2^k+1} \right))\\
&&+ \Tr^{2k}( \gamma^2 a^{2^{2k}+2^k+1}( \omega^{2^{2k}+2^k+1}+(\omega^{2^{2k}+2^k+1})^{2^{2k}})) \\
&=& \Tr^{2k}( y \left( \gamma a^{2^{k-1}}+\gamma^2 a^{2^k+1} \right) +\alpha \gamma^2 a^{2^k+2}).
\end{eqnarray*}
From now on the proof continues very much like Dobbertin's original proof. We denote by $\mu(y)=(-1)^{\Tr^{2k}(y)}$ and
\[ \pi(a)=\gamma a^{2^{k-1}}+\gamma^2a^{2^k+1}.\]
We compute
\begin{eqnarray*}
\tfr{g}(u+\omega v)&=& \sum_{y,a \in \F{2k} }\mu( y\pi(a)+ \alpha \gamma^2a^{2^k+2}+uy+va) \\
&=&  \sum_{a} \mu( \alpha \gamma^2 a^{2^k+2}+va) \sum_y \mu( y(\pi(a)+u)) \\
&=& 2^{2k} \sum_{a, \pi(a)=u} \mu( \alpha \gamma^2 a^{2^k+2}+va).
\end{eqnarray*}
For any $u$ we have to study the set $M=\{a \ | \ \pi(a)=u\}$ and in particular its possible size. First note that
\[ \pi(a)=\pi(a+c) \]
implies
\begin{eqnarray*}
0 &=& \pi(a)+\pi(a+c)+(\pi(a)+\pi(a+c))^{2^k} \\
&=& \gamma(c^{2^{k-1}}+c^{2^{2k-1}}) \\
&=& \gamma(c+c^{2^k})^{2^{k-1}}
\end{eqnarray*}
and we conclude $c \in \F{k}$.  Therefore, we can equivalently study the set
\[ \{c^2  \in \F{k} \ | \ \pi(a_0+c^2)=u\} \]
where $a_0$ is an element in $M$. Note that we use $c^2$ instead of $c$ to get rid of the power $2^{k-1}$.
Considering the equation
\[ \pi(a_0)+\pi(a_0+c^2) =0\]
we get the following equation for $c$
\begin{eqnarray}\label{eqn:c}
c^4+(a_0^{2^k}+a_0)c^2+\gamma^{-1}c&=&0
\end{eqnarray}
which immediately implies $|M| \in \{0,1,2,4\}$. As $\tfr(g)(u+\omega v) \le 2^{2k}|M|$ the only case we need to care about for proving the theorem is the case $|M|=4$. In this case the set $M$ consists of elements
\[ M=\{ a_0,a_0+c_0,a_0+c_1,a_0+c_0+c_1 \} \]
where $c_0,c_1$ are solutions of (\ref{eqn:c}) and thus $c_0c_1(c_0+c_1)=\gamma^{-1}$. Next we compute
\begin{eqnarray*}
\sum_{a \in M} \Tr^{2k}( \alpha \gamma^2 a^{2^k+2}+va) &=& \Tr^{2k}( \alpha \gamma^2 ( a_0^{2^k+2}+(a_0+c_0)^{2^k+2}\\
& & +(a_0+c_1)^{2^k+2}+(a_0+c_0+c_1)^{2^k+2}))\\
& & \\
&=& \Tr^{2k}(\alpha \gamma^2( c_0c_1^2+c_1c_0^2)) \\
&=& \Tr^{2k}(\alpha \gamma^2( c_0c_1(c_0+c_1))) \\
&=& \Tr^{2k}(\alpha \gamma) \\
&=& \Tr^{k}(\gamma (\alpha+\alpha^{2^k})) \\
&=& \Tr^k(\gamma^2)=1
\end{eqnarray*}
which implies
\[\tfr{g}(u+\omega v)= \sum_{a \in M} \mu( \alpha a^{2^k+2}+va) = \pm 2^{2k+1} .\]
\end{proof}

\section{Closing Remarks and Open Problems}
We have demonstrated that the function $f(x)=x^{2^{2k}+2^{k}+1}$ has the same resistance to both differential and linear attacks as the inverse function. The fact that it can permute the field when $k$ is odd means it could be used in a cryptosystem acting on 12 bits.
We now list all the known highly nonlinear permutations with differential uniformity of 4.
For power mappings we conjecture this list to be complete.
\begin{center}

 \begin{tabular}{|c|c|c|}
 \hline
  &  &  \\
 ${\bf f(x)} $ & {\bf Conditions} & {\bf Ref. }\\
 \hline
  &  &  \\

 $x^{2^s+1}$  & $n=2k$, \ $k$ odd   & \cite{G}  \\
 & $gcd(n,s)=2$ &  \\
 \hline
   &  & \\
 $x^{2^{2s}-2^s+1}$ &  $n=2k$, \ $k$ odd   & \cite{K}  \\
 &  $gcd(n,s)=2$  &  \\
 \hline
  &  &  \\
$x^{-1}$  & $n$ even  & \cite{BD}\\
  &  &  \\
 \hline
  &  &  \\
  $x^{2^{2k}+2^{k}+1}$ &  $n=4k$, \ $k$ odd &  This article\\
  &  &  \\

 \hline
 \end{tabular}.
\end{center}

\begin{open}
Find more highly nonlinear permutations of even degree fields with differential uniformity of 4.
\end{open}

\begin{open}
Find a function, defined on a field of even degree, with higher nonlinearity than $2^{n-1}-2^{\frac{n}{2}-1}$ or prove that such a function can't exist.
\end{open}

\end{document}